\newtheorem{theorem}{Theorem}
\newtheorem{definition}{Definition}
\newtheorem{example}{Example}
\newcommand{\blank}{`\mathrm{blank}\text{'}}
\newcommand{\G}{\ensuremath{\mathcal{G}}}
\newcommand{\A}{\ensuremath{\mathcal{A}}}
\renewcommand{\H}{\mathcal{H}}
\newcommand{\V}{\mathcal{V}}
\newcommand{\E}{\mathcal{E}}
\newcommand{\C}{\ensuremath{\mathcal{C}}}
\newcommand{\Z}{\ensuremath{\mathcal{Z}}}
\newcommand{\F}{\mathcal{F}}
\newcommand{\Q}{\mathbb{Q}}
\newcommand{\M}{\mathcal{M}}
\renewcommand{\S}{\mathcal S}
\newcommand{\N}{\mathbb{N}}
\newcommand{\xor}{\oplus}
\newcommand{\PGedit}[1]{\textcolor{black}{#1}}
\newcommand\independent{\protect\mathpalette{\protect\independenT}{\perp}}
\def\independenT#1#2{\mathrel{\rlap{$#1#2$}\mkern2mu{#1#2}}}
\title{A Minimal Intervention Definition of Reverse Engineering a Neural Circuit}
\author{%
  Keerthana Gurushankar$^\dagger$ and Pulkit Grover$^*$ \\
  $\dagger$ Department of Mathematics\\ $*$Department of Electrical \& Computer Engineering, and Neuroscience Institute\\
  Carnegie Mellon University\\
  Pittsburgh, PA 15213 \\
  \texttt{kgurusha@andrew.cmu.edu, pulkit@cmu.edu } \\
}
\begin{document}

\maketitle

\begin{abstract}
In neuroscience, researchers have developed informal notions of what it means to reverse engineer a system, e.g., being able to model or simulate a system in some sense. A recent influential paper of Jonas and Kording, that examines a microprocessor using techniques from neuroscience, suggests that common  techniques to understand neural systems are inadequate. Part of the difficulty, as a previous work of Lazebnik noted, lies in lack of formal language. Motivated by these papers, we provide a theoretical framework for \textit{defining} reverse engineering of computational systems, motivated by the neuroscience context. Of specific interest to us are recent works where, increasingly, interventions are being made to alter the function of the neural circuitry to both understand the system and treat disorders. Starting from Lazebnik's viewpoint that understanding a system means you can ``fix it'', and motivated by use-cases in neuroscience, we propose the following requirement on reverse engineering: once an agent claims to have reverse-engineered a neural circuit, they subsequently need to be able to: (a) provide a minimal set of interventions to change the input/output (I/O) behavior of the circuit to a desired behavior; (b) arrive at this minimal set of interventions while operating under bounded rationality constraints (e.g., limited memory) to rule out brute-force approaches. Under certain assumptions on the model, we show that this reverse engineering goal falls within the class of undecidable problems, by connecting our problem with Rice's theorem. Next, we examine some canonical computational systems and reverse engineering goals (as specified by desired I/O behaviors) where reverse engineering can indeed be performed. Finally, using an exemplar network, the ``reward network'' in the brain, we summarize the state of current neuroscientific understanding, and discuss how computer-science and information-theoretic concepts can inform goals of future neuroscience studies.

\end{abstract}


\vspace{-0.1in}
\section{Overview}\label{sec:intro}
\vspace{-0.1in}



Two works in this century point to the need for formal definitions and rigor in understanding neural computation. The essay of Lazebnik~\cite{lazebnik2002can}, provocatively titled ``Can a biologist fix a radio,'' emphasizes on the need for a formal language to describe elements and questions within biology so that there is reduced ambiguity or vagueness, and clear (falsifiable) predictions are made. This need is becoming increasingly evident in attempts to reverse engineer the brain. While neural recording and stimulation technology is advancing rapidly\footnote{Neural recordings are undergoing their own version of ``Moore's law'': the number of neurons being recorded simultaneously is increasing exponentially~\cite{stevenson2011advances}.}, and techniques for analyzing data with statistical guarantees have also expanded rapidly, the techniques do not provide satisfying answers for understanding the system~\cite{jonaskording,gao2015simplicity}. This is most evident in the strikingly detailed work of Jonas and Kording~\cite{jonaskording}\footnote{Titled ``\textit{Could a neuroscientist understand a microprocessor?}'',~\cite{jonaskording} follows in the footsteps of Lazebnik's, but also tests popular  techniques from  computational neuroscience. See also the \textit{Mus Silicium} project~\cite{hopfield2001moment}.}, which use an early but sophisticated microprocessor, MOS 6502, instead of Lazebnik's radio. They examine this microprocessor under 3 different ``behaviors'' (corresponding to 3 different computer games, namely, Donkey Kong, Space Invaders, and Pitfall), and conclude that ``... current analytic approaches in neuroscience may fall short of producing meaningful understanding of neural systems, regardless of the amount of data''. The work also underscored the need for rigorous testing of tools on simulated data prior to application on real data for obtaining inferences. Because they focus on  concrete implementations and a fully specified and simple system, they conclude that they should obtain an understanding that ``guides us towards the descriptions'' commonly used in computer architecture (e.g., an Arithmetic Logical Unit consisting of simple units such as adders, a memory). Subjective definitions of reverse engineering have been explored elsewhere as well (e.g.~\cite{BinYuPaper,doshi2017towards}). 

Inspired by~\cite{lazebnik2002can,jonaskording}, we ask the normative question: what end-goal for reverse engineering \textit{should} the neuroscientists aim for? Our main intellectual contribution in this context can be summarized in two pieces: a) Viewing reverse engineering as \textit{faithful summarization}, i.e., one needs to be represent the computation not just faithfully but also economically; and b) Specifying what may constitute faithful representation of a computation in the context of neuroscience. Specifically, we take an \textit{minimal-interventional} view of faithful representation, as explained below.

\textit{Reverse engineering is faithful summarization}: The act of modeling/abstracting itself is compression, as good models tend to preserve the essence of the phenomenon/aspect of interest, discarding the rest~\cite{galloway2017compression}. This is also reflected in neuroscience-related works~\cite{craver2006mechanistic}. Literature in Algorithmic Information Theory, which uses Kolmogrov complexity (minimal length of a code to compute a function) to quantify degree of compression, has also been connected to understanding~\cite{chaitin2005epistemology}. E.g., a reverse engineering agent (human or artificial) should be able to compress the description of the computational system in a few \textit{bits}. The \textit{degree} to which the description can be compressed, while still maintaining a faithful representation, quantifies the \textit{level or degree} of understanding (i.e., reverse engineering). This compression rules out, for instance, brute-force approaches that store a simulation of the entire computational system as reverse engineering (discussed further in Section~\ref{sec:background}).

\textit{What constitutes faithful representation}: How do we quantify faithfulness of a representation? We believe it is important to not just preserve the input/output (I/O) relationship, but also preserve \textit{how} the function is computed, summarizing relevant information from the structure and architecture of the network and the function computed at each of the nodes (e.g., the structure of the Fast Fourier transform, FFT Butterfly network, considered in Section~\ref{sec:examples}, is integral to how the FFT is often implemented). In other words, preserving only the I/O relationship misses the point of \textit{how} the computation is carried out (it preserves, exclusively, \textit{what} function is implemented, but not how). Motivated by operational goals of understanding  implementation as a way of understanding how the computation is performed, we impose an \textit{interventional} requirement on faithful representations, namely, that a representation is faithful if it enables predicting \textit{minimal interventions that change the I/O behavior of the system from the existing behavior to another desired behavior}. Our emphasis on minimal interventions is because we want to rule out approaches that change the entire system to another system (i.e., those that only rely on the I/O relationship and not the structure/implementation, e.g., an approach that replaces the entire system with one that has a desirable I/O behavior might not be a minimal intervention).

Tying the two aspects above together, we arrive at our definition of reverse engineering (more intuition in Section~\ref{sec:background}, formally stated in Section~\ref{sec:definition}). Informally, one must be able to summarize the description using just a few bits, and this description should suffice for minimal interventions to change the I/O relationship to a desired one. 

Our interventional definition is not without precedence. Indeed, a classical (if informal) view of understanding a system requires that one must be able to \textit{break it into pieces and put it back together}, or, in Lazebnik's words~\cite{lazebnik2002can}, ``fix'' it.  Some existing approaches in explainable/interpretable machine-learning also use  interventions to understand the system, e.g., influence of features on the output~\cite{anupamQuantitative}. This might offer an achievability of reverse engineering, but our work is distinct in that it attempts to \textit{define} explainability in an interventional sense. Here,  our goal is one of editing the \textit{network} (and not just the features) to demonstrate understanding. Interventionist accounts of explanations have been discussed in philosophy of science. Woodward~\cite{woodward2011scientific} argues in support of explanations that describe not only the I/O behavior of the system, but also the behavior after interventions. In the context of neuroscience,  Craver~\cite{craver2006mechanistic}, among others, separates ``explanatory models'' from ``phenomenally adequate''. Whereas phenomenally adequate models might only describe or summarize the phenomenon of interest, explanatory models should also allow a degree of control and manipulation. 

These views are well aligned with ours. Additionally, our work (specifically, the minimal interventions aspect) is motivated by advances in neural engineering and clinical efforts in treating disorders. Recent efforts have succeeded in engineering systems (e.g. neural dust, nanoparticles, injectable electronics~\cite{neuraldust,trevathan2019injectable,xu2020remote}) that can be implanted with minimal tissue damage, and are being tested in animal experiments (even noninvasive techniques are increasing in their precision~\cite{JNEPaper}). Recent clinical efforts in humans have involved chronic (i.e., long-term) implantation of electrodes for treating depression~\cite{bewernick2010nucleus}, obsessive-compulsive disorder (OCD)~\cite{blomstedt2013deep}, addiction~\cite{pierce2013deep}, obesity~\cite{oh2009deep}, etc., which are all disorders of the reward network discussed in Section~\ref{sec:reward}. One clinical end-goal is to manipulate this circuit with minimal interventions. Where do we place and when do we activate the neural implants, and what is the effect they should produce? Our work casts this question in a simplified and abstract model.

{In explainable AI literature, there is an acknowledgment that being able to propose interventions is a way of demonstrating understanding of a decision-making system~\cite{doshi2017towards,sokol2020explainability,molnar2020interpretable}, although much of this body of work is focused on interventions on the feature space~\cite{Sundararajan2017Axiomatic,Dabkowski2017Real,Bhatt2020Evaluating} or individual data points~\cite{Koh2017Understanding,Kim2016Examples}, rather than inside the computational network. Rob Kass, a noted neuroscientist-statistician, notes in his Fisher lecture~\cite{kassLecture}, using the example of the brain's reward circuitry~\cite{russo2013brain}, that the goal of tools that describe information flow can be to obtain interventions (e.g. using neurostimulation) on the system. He suggests that understanding information flow can help identify optimized interventions to treat disorders such as anxiety and addiction, both related to the reward network~\cite{russo2013brain}. In AI, it is often not required for explanations to be at a physical implementation level. In neuroscience, as noted here, explanations tied to the implementation can help with interventions for treating disorders (specially with recent advances in neuroengineering). }

 \textit{What this work accomplishes}. The main contribution of this paper is 3-fold, (\textit{i}) \textit{the reverse-engineering definition itself}, stated formally in Section~\ref{sec:definition}. (\textit{ii}) \textit{An undecidability result}: In the spirit of formal treatments, even under optimistic assumptions on what can be learned about the system through observations and interventions, we obtain a hardness/impossibility result, showing that a sub-class of the general set of reverse engineering problems is undecidable, i.e., no agent which is itself a Turing machine can provide a desirable reverse engineering solution for arbitrarily chosen problems for our minimal-interventions definition. This result is obtained by connecting Rice's theorem in theoretical computer science~\cite{rice1953classes} with our reverse engineering problem, and is the first connection drawn between neuroscience and Rice's theorem. {Further, to illustrate how this result about the undecidability of reverse engineering is not merely an artifact of our chosen definitions, we also include alternative plausible definitions of reverse engineering, and proofs of their undecidability in Appendix \ref{sec:alternative_definitions}}; (\textit{iii}) \textit{Examples}: In Section~\ref{sec:examples}, we illustrate that this goal is attainable in interesting (if toy) cases, by using examples of simple computational systems, including a toy network inspired by the reward network in the brain, and describing their reverse engineering solutions. {Additionally, in Section~\ref{sec:reward}, we discuss \textit{an exemplar neural circuit: the reward network}. We overview the state of understanding of this exemplar circuit and discuss what it may lack from our reverse engineering perspective}. We conclude with a discussion in Section~\ref{sec:discussion}, including limitations of our work. 

 {\textit{Place within (and outside) TCS's scope and literature}: In Section~\ref{sec:background}, we provide a more detailed literature review to help position the main contribution of our work in the neuroscience context (i.e., \textit{outside} CS-theoretic context). Within the theoretical computer science context, we view our main contribution to be the definitions and a connection  with models used in neuroscience (see, e.g. models in~\cite{venkatesh2020information,Friston2011Functional}, etc.). This allows us to formally examine neuroscience questions using CS-theoretic techniques, connecting the context of neuroscience with techniques from CS-theory (in particular  Rice's theorem). The specific undecidability results simply fall out of making this formal connection (see also Appendix~\ref{sec:alternative_definitions}). More broadly, modifications on our approach and models can pave the way to more formal treatment of neuroscience problems from a CS theoretic lens, including complexity-theoretic and algorithmic advances on problems of reverse engineering. 

\section{Background and related  neuroscience work}\label{sec:background}

Explicitly or not, the question posed here connects with all works in neuroscience. Thus, rather than task ourselves with the infeasible goal of a thorough survey, we strive to illustrate the evolution of the relevant neuroscience discussion.

Perhaps the simplest reverse-engineering of a computational system is being able to ``simulate'' the I/O behavior of the system (see Introduction of~\cite{jonaskording}). E.g., cochlear and retinal prostheses attempt to replace a (nonfunctional) neural system with a desirable system with ``healthy'' I/O behavior (see also~\cite{horiuchi1993analog,berger2011cortical} for examples of such attempts for sensory processing and memory, respectively). This ``black-box'' way of thinking may suffice for understanding what is being computed\footnote{However, we acknowledge that I/O behavior can also have more or less understandable descriptions, e.g. machine-learning models of different complexity approximating the same I/O relationship. Thus, while it is not a focus of this work, a black-box way of describing I/O relationships has more nuance to it than is discussed here.}, but not \textit{how}. To describe \textit{how} a computation is being performed, one might seek to describe the input-output behavior of individual elements of computation (which could be as fine-grained as compartments of a single neuron, or a neuron itself, or a collection of neurons). There is a compelling argument that even this component-level simulation is insufficient. E.g., Gao and Ganguli~\cite{gao2015simplicity}, in their work on required minimal measurements in neuroscience, note that while we can completely simulate  \textbf{artificial} neural networks (ANNs), most machine-learning researchers would readily accept that we do not understand them. This led Gao and Ganguli to ask: ``$\ldots$ can we say something about the behavior of deep or recurrent ANNs without actually simulating them in detail?'' (see related field of ``explainable machine-learning''~\cite{BinYuPaper,dovsilovic2018explainable}). That is, a component-level understanding can miss an understanding at an intuitive level. 

To state what a more comprehensive understanding of a computational system could look like, inspired by the visual system, cognitive scientist David Marr proposed ``3 levels of analysis''~\cite{marr1982}: computational, algorithmic, and implementation. At the lowermost, \textit{implementation level}, is the question of how a computation is implemented in its hardware. Above that, at the algorithmic level, the question, stated informally by Marr, is what algorithm is being implemented, e.g., how it represents information and modulates these representations. Finally, at the highest level is the problem being solved itself. We refer the reader to~\cite{peebles2015thirty} for some of the recent discussions on Marr's levels. 
 Gao and Ganguli write in agreement, with subtle differences: ``understanding will be found when we have the ability to develop simple coarse-grained models, or better yet a hierarchy of models, at varying levels of biophysical detail, all capable of predicting salient aspects of behavior at varying levels of resolution''.\footnote{Thereon, Gao and Ganguli connect the problem of evaluating the minimum number of required measurements as a metric for understanding the system. This view is inspired by the success of modern machine-learning approaches, but might find disagreement from Chomsky~\cite{chomsky}.} While influential and useful, Marr's and Gao/Ganguli's descriptions are too vague to quantify reverse engineering in a formal sense. 



An exciting alternative approach was recently proposed by Lansdell and Kording~\cite{KordingRulesOfLearning}. Motivated by lack of satisfactory understanding of ANNs, their approach is to change the goals. 
They ask the question: can we learn the rules of learning, and could that be a pathway to reverse engineering cognition? This is an interesting approach worthy of further examination, but is not directly connected with this current work.

As discussed in Section~\ref{sec:intro}, complementary to these lines of thought, we take a fundamentally interventional view of reverse engineering.  We also strive, in the established information-theoretic and theoretical computer science traditions, to state the problem formally, and then observe fundamental limits and achievabilities. This goal is challenging, to say the least, but efforts in this direction are needed to ground the questions in neuroscience concretely.

\section{Our minimal intervention definition of reverse engineering}\label{sec:definition}
\textbf{Overview of our definition and rationale for our choices}: We allow the agent performing the reverse engineering to specify several classes of desirable I/O relationships. To constrain the agent from using brute-force approaches, if the agent claims to have successfully reverse engineered the system, it must be able to \textit{produce a Turing machine} that requires only a limited number of bits to describe. This Turing machine should be able to take a class of desirable I/O relationships as input, and provide as output a set of interventions that change the I/O relationship to one of the desirable ones within this class. The rationale for the requirement on the agent to provide a Turing machine is that it is a complete description of the summarization. An informal ``compression'' to a certain number of bits could hide the cost of encoding and decoding, or of some of the instructions in execution of the algorithm.  The rationale of allowing any one of a class of I/O behaviors as an acceptable solution is that it allows for approximate solutions or choosing one among solutions that are (nonuniquely) optimal according to some criteria (e.g. in the reward circuitry which drives addiction, discussed in Section~\ref{sec:examples}, any I/O behavior that eliminates the reward of an addictive stimulus might suffice). 

In addition, we allow the Turing machine to have a few accesses to the computational system where it can perform interventions and observe the changed I/O relationship. While this still disallows brute-force approaches, it enables lowering the bar on what is required for reverse engineering.

These definitions are there to lay down a formal framework in which we can obtain results. They can easily be modified. In arriving at this reverse engineering solution (i.e., in generating the Turing Machine), we allow the agent to access the ``source code'' of the computaional system $\C$. This might appear to be an optimistic assumption (indeed it is so) as it might require noiseless measurements everywhere, and possibly causal interventions, which current neuroengineering techniques are very far from. The definition can readily be modified to include access to limited noisy observations, which will only make the reverse engineering harder. Note that with the ``Moore's law of neural recording,'' it is conceivable that each node and edge can indeed be recorded in distant (or nearby) future~\cite{stevenson2011advances}.  As another example, while we assume, for simplicity, that communication happens at discrete time-steps, this assumption can be relaxed for some of our results, e.g., our undecidability result in Section~\ref{sec:impossible} because it only makes the reverse engineering problem harder). Similarly, equipping the system with an additional external memory (e.g., the setup in~\cite{NeuralTuringMachines}) also makes the reverse engineering problem harder.

\subsection{System model}\label{sec:sysmodel}


\begin{definition}[Computational System and Computation] \label{def:comp-sys}
	A \emph{computational system} $\C$ is defined on a finite directed graph $\G(\V,\E)$, which is a collection of nodes $\V$ connected using directed edges $\E$. The computation uses symbols in a set $\S\subseteq \mathbb{R}$ ($\S$ is called the ``alphabet'' of $\C$), where $0\in\S$. Each node $\V$ stores a value in $\S$ (initialized to any fixed $s\in\S$). The computational input is a finite-length string of symbols in $\S$. The computation starts at time $t=0$ and happens over discrete time steps. At each time step, the $i$-th node, for any $i$, computes a function on a total of $n_i$ symbols, which includes (i) symbols stored in each node from which it has incoming edges (called ``transmissions received from'' the nodes they are stored in), (ii) the symbol stored in the node itself, and (iii) at most one symbol from the computational input. The node output at any time step, also a symbol in $\S$, replaces the stored value. That is, the $i$-th node computes a function $\S^{n_i}\to \S$, mapping the $n_i$ symbols from the previous time instant (including nodes with incoming edges, the locally stored value, and the computation input) to update its stored value. The stored values across all nodes collectively form the ``state'' of the system at each time instant. A set of nodes are designated as the output nodes, and their first nonzero transmissions are together called the output of the computation. 
\end{definition}
%
This description of $\C$, with  $\G$ and the functions computed at the nodes, is called the ``source code'' of $\C$. This definition is inspired by similar definitions in information theory and theory of computation~\cite{Ahlswede2000Network,Thompson1980Complexity}, including a recent use in neuroscience~\cite{venkatesh2020information}.





\begin{definition}[Input/Output (I/O) relationship of $\mathcal{C}$]
The input-output relationship (I/O relationship) of $\mathcal{C}$ is the mapping from the inputs to $\mathcal{C}$ to the outputs of $\mathcal{C}$. 
\end{definition}

\begin{definition}[Interventions  on $\mathcal{C}$]
A single intervention on $\mathcal{C}$ modifies the function being computed at exactly one of the nodes in $\mathcal{C}$ at exactly one time instant. 
\end{definition}
An intervention would commonly change the I/O relationship of $\mathcal{C}$. 



\subsection{Definition of reverse engineering}
As discussed, our definition in essence is about making the system do what you want it to do. One way to view this, consistent with ``fixing'' the system, is by modifying the system $\C$, we should be able to get the input-output relationship we desire. 

Some notation: we will use $\H=\{\F_p\}_{p\in\mathcal{P}}$ (for a countable index set $\mathcal{P}$) to denote a collection of sets $\F_p$ where each $\F_p$ is a set of I/O relationships obtainable by multiple interventions on $\C$. 
Intuitively, each element $\F\in\H$ represents a set of I/O relationships that are ``equivalent'' from the perspective of the end-goal\footnote{Note that, because $\F_p$ need not be disjoint sets, our definition allows two I/O relationships to be equivalent w.r.t. one $\F_p$ but not w.r.t. another $\F_p$.} of  interventions on $\C$. For instance, they could all approximate a desirable I/O relationship. As an illustration for the reward network, say $\H=\{\F_1,\F_2\}$, where $\F_1$ is the set of I/O relationships corresponding to unhealthy addiction, whereas $\F_2$ might represent I/O relationships corresponding to healthy motivation. 

To perform these interventions, we now define an agent $\A$, whose goal is to generate a Turing machine that takes as input an index $p$, and provides as output the necessary interventions on $\C$ to attain a desirable I/O relationship $g\in\F_p$.

\begin{definition}[Reverse Engineering Agent $\A$ and $M$-bit summarization]
An agent $\A$ takes as input the source-code of $\C$ and $\H$, a collection of sets of I/O relationships, and outputs a Turing-machine $TM_{\C,M,\H,Q}$ which is described using no more than $M$-bits. We refer to $TM_{\C,M,\H,Q}$ as an $M$-bit summarization of $\C$. $TM_{\C,M,\H,Q}$ takes as input $p\in\mathcal{P}$. Additionally, $TM_{\C,M,\H,Q}$ also has access to an oracle to which it can input up to $Q$ different sets of multiple interventions on $\C$, and $p'\in\mathcal{P}$. For each set of interventions, the oracle returns back whether the resulting I/O relationship for a set of multiple interventions lies in $\F_{p'}\in\H$. For any input $p\in\mathcal{P}$, $TM_{\C,M,\H,Q}$ outputs a set of interventions $\Z$ on $\C$. It can also declare ``no solution''.
\end{definition}

Inspired by similar bounded-rationality approaches in economics and game theory~\cite{simon1990bounded,papadimitriou1994complexity,sims2005rational}, the $M$-bit summarization can enforce a constraint on $\A$ that disallows brute-force approaches, e.g., where $\A$ simply stores the changes in I/O relationships for all possible sets of interventions, and for a given reverse-engineering goal, simply retrieves the solution from the storage. We now arrive at our definition of reverse engineering.


\begin{definition}[$(\H,L,M,Q)$-Reverse Engineering]
\label{def:weakRE}
Consider a computational system $\mathcal{C}$ with an I/O relationship described by $f(\cdot)$. Let $\mathcal{A}$ be an agent that is claimed to have $(\H,L,M,Q)$ reverse engineered $\mathcal{C}$. Then, for a given $p\in\mathcal{P}$  that is input to the Turing machine $TM_{\C,M,\H,Q}$ (which was generated by $\A$), the output should be a set of interventions $\Z$ of the smallest cardinality (if $|\Z|\leq L$) that change the I/O relationship from $f(\cdot)$ to any $g\in \F_p$ (but not necessarily for all $g\in\F_p$). If no such $g\in \F_p$ exists,  then $TM_{\C,M,\H,Q}$ should declare ``no solution'', i.e., no such set of ($L$ or fewer) interventions exists.

\end{definition}







\section{Undecidability of some reverse engineering problems}\label{sec:impossible}


Reverse engineering is not undecidable for every class of $\C$'s, the class has to be rich enough. Below, we first prove a result on how rich the class needs to be for it to be Turing-equivalent. Following this result, we use Rice's theorem~\cite{rice1953classes,RicesThm} to make a formal connection with reverse engineering, proving in Theorem~\ref{thm:undecidability} that for set of $\C$'s that use an $\S$ of infinite cardinality, and computable functions at each node, the reverse engineering in Definition~\ref{def:weakRE} is undecidable for nontrivial $\H$'s, i.e., no agent $\A$  that is itself a Turing Machine can provides a reverse engineering solution for every $\C$ in this class for any $L\geq 0$, any $M$ (including $M=\infty$), and $Q=0$. Our undecidability result (Theorem~\ref{thm:undecidability}, which uses Theorem~\ref{thm:equivalence}(2) that is proven for a more limited set of $\C$'s) is for a more restricted class (specifically, the $\C$'s that can simulate ``$\sigma$-processor nets'' of~\cite{sontag95}) of computational systems than allowed in Definition~\ref{def:comp-sys}. Hence, reverse engineering the broader class (for which Theorem~\ref{thm:undecidability} is  stated) would only be harder (and hence is also undecidable). 

\begin{theorem}\label{thm:equivalence}
		(1) If {$|\S|$ is finite}, then the class of $\C$'s in Def.~\ref{def:comp-sys} is equivalent to deterministic finite-state automaton (DFA).\\
		(2) If {$|\S|$ is countably infinite} (e.g. $\Q$) and all nodes compute computable functions, then the class of $\C$'s is Turing equivalent.\\
		(3) If the function at any node is uncomputable, then the class of $\C$'s is super-Turing.
\end{theorem}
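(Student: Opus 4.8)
The plan is to handle each of the three parts by exhibiting mutual simulations (and, for part (3), one strict separation), where ``equivalent'' is read as: the family of I/O relationships realizable by the $\C$'s in the stated subclass coincides with the family realizable by the stated machine model, and ``super-Turing'' means this family strictly contains every Turing-computable relationship. For part (1), the key observation is that the global state of $\C$ lives in $\S^{|\V|}$, which is finite. First I would simulate any such $\C$ by a DFA whose state is the pair consisting of the current global state of $\C$ together with a finite record, per output node, of whether that node has already emitted a nonzero symbol and if so which one (finite because $\S$ and $\V$ are finite); the DFA reads the input one symbol per time step, applies all node updates in parallel, and reads acceptance off the output record. Conversely, given a DFA with states $\{q_1,\dots,q_k\}$ I would build a $\C$ whose nodes hold a base-$|\S|$ encoding of the current DFA state (a single node if $|\S|\ge k$), with node update functions implementing the transition table and a designated output node that emits a nonzero symbol exactly upon entering an accepting state. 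The only mildly fiddly points are reconciling the ``at most one input symbol per node per step'' input convention and the ``first nonzero transmission'' output convention with standard DFA semantics; both are handled by a fixed finite gadget.

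For part (2), the easy direction is that each symbol of $\S$ (e.g.\ each rational) has a finite description, so the global state of $\C$ is always a finite object; since every node computes a computable function, a Turing machine stores this state and simulates $\C$ step by step, calling the node-function algorithms as subroutines, so every I/O relationship of such a $\C$ is Turing-computable. The converse, universality, is the crux. I would simulate an arbitrary Turing machine $T$ by a fixed-size $\C$ over $\S=\Q$: encode the two halves of $T$'s tape as two rationals in $[0,1]$ written in base $k$ (with $k$ exceeding the tape alphabet size), so that ``push symbol $a$'' is the affine map $x\mapsto (x+a)/k$, ``read top'' is $\lfloor kx\rfloor$, and ``pop'' is $x\mapsto kx-\lfloor kx\rfloor$ — all total computable functions and hence admissible node functions in the sense of Def.~\ref{def:comp-sys}; represent $T$'s finite control by a constant number of nodes whose update is the (finite, hence computable) transition table; wire the stack nodes and control nodes in the usual two-stack simulation of a tape; and designate an output node that fires when $T$ halts. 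This is exactly the route of embedding a $\sigma$-processor net of~\cite{sontag95} inside a $\C$, and it is this constructible subclass of $\C$'s on which Theorem~\ref{thm:undecidability} is later invoked; since this subclass already realizes every Turing-computable relationship, so does the full class, and by the easy direction it realizes nothing more.

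Part (3) follows immediately by perturbing the universal construction of part (2): keep the same fixed-size network but replace the update function at a single node by an uncomputable function, e.g.\ the characteristic function of the halting set on $\N\subseteq\S$. The surrounding computable nodes can present a query to this node and read back its answer, so the resulting $\C$ decides the halting problem and hence computes an I/O relationship no Turing machine computes; since the class still contains all the Turing-universal $\C$'s of part (2), it is strictly more powerful, i.e.\ super-Turing. I expect the main obstacle to be the universality construction in part (2): one must pin down the base-$k$ stack encoding, verify that push/top/pop and the control transitions really are total computable functions on $\Q$ consistent with the ``one function per node per time step'' discipline of Def.~\ref{def:comp-sys}, and check that the ``first nonzero transmission'' output convention faithfully records halting. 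Parts (1) and (3) are then comparatively routine bookkeeping built on top of it.
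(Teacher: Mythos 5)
Your proposal is correct, and its skeleton matches the paper's: mutual simulation for (1), a straightforward step-by-step Turing simulation for the easy direction of (2) plus an embedding of a Turing-universal network model for the hard direction, and a single uncomputable node for (3). The differences are in execution. In part (1), the paper's converse construction uses one node per DFA state with a ``token-passing'' scheme (the active node holds the current input symbol, all others hold a blank), whereas you encode the DFA state numerically in a small number of nodes; both are routine and correct. In part (2), the paper does \emph{not} carry out the two-stack, base-$k$ tape encoding itself: it cites the Turing completeness of the $\sigma$-processor nets of Siegelmann and Sontag~\cite{sontag95} as a black box and only verifies that every such net is realizable as a $\C$ (one node per processor, all edges present, each node computing $\sigma(\sum_j A_{ij}x_j+db_i+vb'_i+c_i)$). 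Your plan re-derives universality directly, which makes the argument self-contained but obliges you to verify the push/top/pop encoding and the output convention by hand --- precisely the machinery the citation lets the paper skip. One small point in your favor: the paper's proof of (2) leaves the ``no more than Turing'' direction implicit, while you state it explicitly; and for (3) the paper's argument is even shorter than yours (a single uncomputable node is not simulable by any Turing machine, full stop), though your halting-set construction also works.
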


\begin{proof}
	Proof overview of (1): We construct a $\C$ (with {finite $\mathcal{S}$}) that simulates a given DFA (full description in the Appendix) as follows: the nodes and edges correspond to the states and transition edges of the DFA. We include an additional output node with incoming edges from all other nodes. When the DFA is in some state $q$, the corresponding node (the ``active'' node) of $\C$ is set to the computational input $d_t$ just received. All remaining nodes store a $\blank$ value. Suppose the DFA transitions to state $q'$ upon receiving $d_t$, then the corresponding node $q'$ sets to the next computational input $d_{t+1}$, becoming the active node in the next time-step. All other nodes are set to $\blank$. Finally, after receiving the full input string, the output node sets to $1$ or $0$ based on whether the last active node of $\C$ corresponded to an accepting/rejecting DFA state.
	
	A DFA can also simulate a computational system $\G(\V,\E)$ with finite $\S$ and $n=|\V|$ nodes as follows: the DFA  (i) has state space $Q=\S^\V$; (ii) has alphabet $\Sigma=\S^\V$; and (iii) starts in the state of each node holding the initial value. The transition function $\delta:Q\times\Sigma\to Q$ is defined as
	    \[
	        \delta((s_1,\dots, s_n), (d_1, \dots, d_n)) = \left(f_i(\left(s_j\right)_{j\in N(i)}, d_j) \right)_{i\in \V}
	    \]
	    and accepting states $\{(s_v)_{v\in \V} : s_o =1\}$ where $s_o$ is the output node of $\C$. The DFA accepts an input string iff the output node of $\C$ would set to $1$ upon receiving the string. 

	Proof of (2): To show the Turing completeness of the class of $\C$'s, we show Turing completeness of a smaller class, namely the set of $\C$'s that can simulate ``$\sigma$-processor nets'', defined in Siegelmann \& Sontag~\cite{sontag95}, which are a model for artificial neural nets operating on rationals using sigmoidal functions $\sigma(\cdot{})$ (see~\cite{sontag95} for details). Siegelmann \& Sontag showed that $\sigma$-processor nets are Turing complete~\cite{sontag95}. Thus it is sufficient to show that the class of $\C$'s can simulate $\sigma$-processor nets, which follows from the following: a $\sigma$-processor net $\mathcal{N}_\sigma$, upon receiving data and validation bits $d, v$, computes $x \mapsto \hat{\sigma}(Ax+db+vb'+c)$ for some matrix $A\in\Q^{N\times N}$ and vectors $b, b', c\in\Q^N$. For each  $\mathcal{N}_\sigma$, we make a computational system on the following directed graph: $N$ nodes, one for each state of $\mathcal{N}_\sigma$, and all edges, with the function computed at node $i$ being
	\[
	    f_i(x_1, \dots, x_n, d, v) = \sigma\left(\sum_{j=1}^N A_{ij}x_j +db_i+vb'_i+c_i\right).
    \]
    
	Proof of (3): Consider a computational system with infinite $\S$, consisting of a single node outputting an uncomputable function of the input $d\in \S$. Since the function is uncomputable, there is trivially no Turing machine capable of simulating it. 
\end{proof}

\begin{definition}[Nontrivial set of languages]
    The set of inputs accepted by a Turing machine is called its \textit{language}. An input string is accepted by a Turing machine if the computation terminates in its accept state (see, e.g.~\cite[Ch 3]{sipser13} for definition). 
    Alternatively, the computation could loop forever or terminate in a reject state.
 A Turing-Recognizable language is one for which there exists a Turing Machine that accepts only the strings in the language, and either rejects or does not halt at other strings. A set of languages $S$ is  \textit{nontrivial} if there exists a Turing-Recognizable language that it contains, and a different  Turing-Recognizable language that it does \textbf{not} contain. 
\end{definition}

{Any I/O relationship for $\C$ can be reduced to a decision problem/language (i.e. a mapping from finite string input to binary ``accept/reject" output) by designating one of its possible outputs as ``reject", and accepting strings with any other output. Thus, an I/O relationship for $\C$ can be viewed as a language of $\C$. Thus, our definition of I/O relationship sets $\F_p$ naturally extends to \textit{nontrivial} $\F_p$'s.} We now state Rice's theorem (Theorem~\ref{thm:rice}), which provides an undecidability result that we rely on to derive our undecidability result (Theorem~\ref{thm:undecidability}) by connecting our class of $\C$'s with Rice's theorem. While originally proven by Rice in~\cite{rice1953classes}, for simplicity, we use the statement of Rice's theorem from~\cite{RicesThm}. 

\begin{theorem}[Rice's theorem~\cite{rice1953classes,RicesThm}]\label{thm:rice}
 Let $S$ be a nontrivial set of languages. It is undecidable whether the language recognized by an arbitrary Turing machine lies in $S$. 
\end{theorem}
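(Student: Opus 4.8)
The plan is to prove Rice's theorem by a reduction from a problem already known to be undecidable, namely the acceptance problem $A_{TM} = \{\langle M, w\rangle : M \text{ accepts } w\}$, to the membership problem $L_S = \{\langle N\rangle : L(N) \in S\}$. The first step is a normalization: I would reduce to the case in which the empty language $\emptyset$ does \emph{not} belong to $S$. If it does, replace $S$ by $\bar S$, the set of Turing-recognizable languages not in $S$; by the definition of a nontrivial set of languages, $S$ contains a recognizable language $L'$ (so $L' \notin \bar S$) and omits a recognizable language $L''$ (so $L'' \in \bar S$), hence $\bar S$ is again nontrivial, and a decider for $L_S$ exists if and only if a decider for $L_{\bar S}$ does (the two sets of TM-encodings are complements, and deciding a set is equivalent to deciding its complement). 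So it suffices to treat the case $\emptyset \notin S$.

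Assume now $\emptyset \notin S$ and fix, using nontriviality, a Turing-recognizable $L_0 \in S$ together with a machine $M_0$ recognizing it. Given an instance $\langle M, w\rangle$, I would computably build a machine $N_{M,w}$ that on input $x$ first runs $M$ on the fixed string $w$, and only if that run halts and accepts does it go on to simulate $M_0$ on $x$, accepting $x$ iff $M_0$ does. Then $L(N_{M,w}) = L_0 \in S$ when $M$ accepts $w$, and $L(N_{M,w}) = \emptyset \notin S$ when $M$ rejects or loops on $w$. Thus $\langle M, w\rangle \in A_{TM}$ iff $\langle N_{M,w}\rangle \in L_S$; since the map $\langle M, w\rangle \mapsto \langle N_{M,w}\rangle$ is computable and $A_{TM}$ is undecidable, $L_S$ is undecidable, which is the claim.

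There is no deep obstacle here; the two points requiring care are (i) checking that the normalization step genuinely preserves nontriviality and that membership in $S$ and in $\bar S$ are interreducible, and (ii) verifying the behavior of $N_{M,w}$ in all three cases for $M$ on $w$ (accept, reject, loop), in particular that a non-accepting run leaves $N_{M,w}$ recognizing exactly the empty language rather than something merely ``close'' to it. No dovetailing is needed, since $N_{M,w}$ runs $M$ only on the single input $w$ before transferring control to $M_0$.
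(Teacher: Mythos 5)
This theorem is stated in the paper as a cited classical result (Rice, 1953) and is not proved there, so there is no in-paper argument to compare against; your job here was essentially to reconstruct the textbook proof, and you have done so correctly. Your reduction from $A_{TM}$ --- normalizing so that $\emptyset \notin S$ (noting that, since $L(N)$ is always Turing-recognizable, deciding membership of $L(N)$ in $S$ and in the set of recognizable languages outside $S$ are interreducible, and that nontriviality in the paper's sense is preserved), then mapping $\langle M, w\rangle$ to a machine that gates a recognizer for some fixed $L_0 \in S$ behind a run of $M$ on $w$ --- is the standard argument, and your case analysis of accept/reject/loop correctly establishes that $L(N_{M,w})$ is exactly $L_0$ or exactly $\emptyset$. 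No gaps.
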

\begin{theorem}\label{thm:undecidability}
	{For an $\H$ containing a nontrivial $\F_p$}, for any $L\geq 0$, $M= \infty$, and $Q=0$, there is no Turing machine $\A$ which can accept as input, an arbitrary computational system $\C$ with infinite set $\S$ and computable functions evaluated at nodes, and output $TM_{\C,M,\H,Q}$ that satisfies the reverse engineering properties in Definition~\ref{def:weakRE}.
\end{theorem}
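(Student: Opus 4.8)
The plan is to reduce the decision problem in Rice's theorem to the reverse engineering problem, exploiting the Turing-equivalence established in Theorem~\ref{thm:equivalence}(2). First I would fix a nontrivial $\F_p \in \H$; by definition there is a Turing-recognizable language $\mathcal{L}_{\text{in}} \in \F_p$ and a Turing-recognizable language $\mathcal{L}_{\text{out}} \notin \F_p$. Suppose for contradiction that a Turing machine $\A$ exists satisfying Definition~\ref{def:weakRE} for this $\H$, with the stated parameters $L \geq 0$, $M = \infty$, $Q = 0$. The strategy is: given an arbitrary Turing machine $N$ (an instance of the Rice decision problem ``is $\mathcal{L}(N) \in \F_p$?''), effectively construct a computational system $\C_N$ whose I/O relationship equals (the language of) $N$, feed $\C_N$ and $\H$ to $\A$, run the resulting $TM_{\C_N,\infty,\H,0}$ on input $p$, and inspect its output to decide membership of $\mathcal{L}(N)$ in $\F_p$ — contradicting Theorem~\ref{thm:rice}.

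The key steps, in order: (1) By Theorem~\ref{thm:equivalence}(2), the class of $\C$'s with countably infinite $\S$ and computable node functions is Turing-equivalent; moreover the simulation is constructive (via the $\sigma$-processor net encoding of Siegelmann--Sontag), so there is a computable map $N \mapsto \C_N$ producing a computational system with the same I/O relationship as $N$. This uses the reduction of an I/O relationship to a language described just before Theorem~\ref{thm:rice}. (2) Since $\A$ is assumed to be a Turing machine, $\A(\C_N, \H)$ halts and outputs a finite description of $TM_{\C_N,\infty,\H,0}$; with $M = \infty$ there is no bit-budget obstruction, and with $Q = 0$ the produced machine has no oracle calls, so its behavior on input $p$ is fully determined. (3) Run $TM_{\C_N,\infty,\H,0}$ on input $p$. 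By Definition~\ref{def:weakRE}, its output tells us whether the I/O relationship of $\C_N$ can be changed to some $g \in \F_p$ using at most $L$ interventions: either it outputs a set $\Z$ with $|\Z| \leq L$ of smallest cardinality, or it declares ``no solution''. In particular, if already $\mathcal{L}(N) \in \F_p$ — i.e.\ the current I/O relationship $f$ of $\C_N$ is itself in $\F_p$ — then the empty intervention set ($|\Z| = 0 \leq L$) achieves a $g \in \F_p$ (namely $g = f$), so $\Z = \emptyset$ must be the output (smallest cardinality). Conversely, if the output is $\Z = \emptyset$, then $f \in \F_p$, i.e.\ $\mathcal{L}(N) \in \F_p$. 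Hence membership ``$\mathcal{L}(N) \in \F_p$'' is decided by checking whether $TM_{\C_N,\infty,\H,0}(p)$ returns the empty set — which would make the Rice-type set $\F_p$ decidable, contradicting Theorem~\ref{thm:rice}. Therefore no such $\A$ exists.

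The main obstacle I expect is making Step (3) airtight, specifically the claim that the reverse engineering output reveals \emph{membership} rather than merely \emph{reachability by $\leq L$ interventions}. The clean resolution is the observation above: $\Z = \emptyset$ is output if and only if $f$ itself already lies in $\F_p$, and ``$f \in \F_p$'' is exactly the nontrivial language-membership predicate to which Rice's theorem applies. One must be slightly careful that $\emptyset$ is a legitimate ``set of interventions'' of cardinality $0 \leq L$, and that Definition~\ref{def:weakRE}'s insistence on \emph{smallest} cardinality forces $\emptyset$ whenever it works; I would state this explicitly. A secondary point to handle carefully is that the reduction $N \mapsto \C_N$ must be a total computable function and that $\C_N$ genuinely has infinite alphabet $\S$ and computable node functions, so that $\C_N$ lies in the class to which the hypothetical $\A$ applies — this is exactly what Theorem~\ref{thm:equivalence}(2) and its constructive proof via $\sigma$-processor nets give us. Finally, I would remark that allowing $Q > 0$ oracle calls or finite $M$ only constrains $\A$ further, so the impossibility is robust, matching the paper's earlier discussion.
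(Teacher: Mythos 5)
Your proposal is correct and follows essentially the same route as the paper's proof: reduce Rice's problem by mapping an arbitrary Turing machine to a computational system via Theorem~\ref{thm:equivalence}(2), run the hypothetical agent's output machine on index $p$, and decide membership of the language in $\F_p$ by checking whether the returned minimal intervention set is empty (zero interventions) versus ``no solution'' or a nonempty set. Your write-up is in fact somewhat more explicit than the paper's about why the empty set is forced exactly when the current I/O relationship already lies in $\F_p$, but the underlying argument is identical.
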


\begin{proof}
    Assuming there were such a Turing machine $\A$, we construct a Turing machine $\M$ (that will solve Rice's problem) as follows:  accept input string $s$ encoding Turing machine corresponding to $\C$ (Theorem~\ref{thm:equivalence} states that, with infinite $\S$ and computable functions, the class of $\C$'s is Turing equivalent), and give $s$ as input to $\A$. If $\A$ outputs a Turing Machine that, on input $p$ (for a nontrivial $\F_p$), outputs `no solution' or $>0$ interventions, then $M$  outputs $0$, else (i.e., for $0$ interventions) it outputs $1$. Then $\M$ decides whether an input Turing machine has language in $\F_p$, contradicting Theorem~\ref{thm:rice}. 
\end{proof}



\section{Some examples of reverse engineered systems}\label{sec:examples}

\begin{figure}[ht]
\begin{center}
\centerline{\includegraphics[width=\columnwidth]{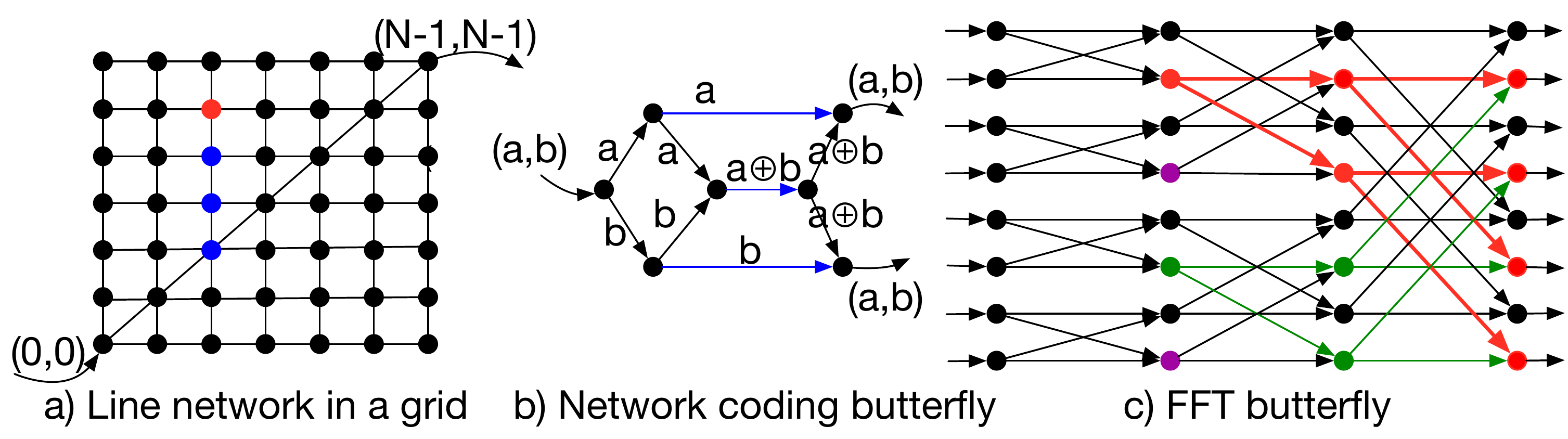}}
\caption{Examples of the first three $\C$'s considered in Section~\ref{sec:examples} for reverse engineering. Input nodes have incoming arrows (with no source), output nodes have outgoing arrows (with no destination). In a), an alternative destination node is shown in red, and blue nodes show where interventions need to be performed to change the I/O relationship to have the output go out of this red node. In c), an example set of output nodes that have their I/O relationship changed are shown. Also shown are pathways which could be affected to cause changes in their behavior.}
\label{fig:examples}
\end{center}
\end{figure}

\begin{example}[Line communication network]
Here, $\C$ is an $N^2$-node network arranged as an  $N\times N$ grid and connected using bidirectional links in the pattern shown in Fig.~\ref{fig:examples}a. The path along a diagonal, going from the (0,0)-node to the (N-1,N-1)-node, is a communication path, with inputs coming to the (0,0)-node, and traversing this path to leave the (N-1,N-1)-node. The set $\H$  contains all sets of I/O relationships, denoted by $\F_{i,j}$, where the $(i,j)$-th node is the destination of communication (i.e., the output of the $(i,j)$-th node is the communication message).
\end{example}
\textit{Reverse engineering $\C$}: $\A$ declares that it has $(\H,L,O(\log(N)),0)$ RE'ed this network for any $L\geq 0$. To do so, $\A$ first identifies the lone information path in the system, namely, the diagonal. The TM output by $\A$ receives as input $(i,j)$, and simply outputs a set of $|i-j|$ nodes that connect $(i,j)$ to the diagonal (namely, if $i>j$, then  $\{(i,j),(i-1,j),\ldots,(j,j)\}$, and symmetrically if $j>i$; note that this is one among many minimal paths to the diagonal from the $(i,j)$-th node).  If the number of nodes in this path exceeds $L$, the TM can declare ``no solution.'' This algorithm requires the TM to store (i) the indices of the node coordinates $(i,j)$ (requiring $O(\log N)$ bits of memory), and (ii) instructions for execute this simple algorithm of reducing one of the two indices (whichever is larger) until they are both equal (requiring constant memory).


\begin{example}[Network-coding butterfly]
Here, $\C$ is the network-coding butterfly network from Ahlswede et al.'s network coding work~\cite{Ahlswede2000Network}. Briefly, two binary symbols, $a$ and $b$, are communicated at both outputs, despite rate limitation on all links of 1 bit, by utilizing an XOR operation in the middle link (see Fig.~\ref{fig:examples}b). $\F_p$ is the set of all changed I/O relationships (not equal to the original butterfly network) where  a) only the first output node is affected (indexed by $p=1$); b) only the second output node is affected ($p=2$); c) both output nodes are affected ($p=3$).
\end{example}
\textit{Reverse Engineering $\C$}: $\A$ declares that it has $(\H,1,M,0)$ RE'ed $\C$ (with $M$ specified below). For the network-coding butterfly, a single intervention suffices. For $p=1$, an intervention on the top edge, for $p=2$, an intervention on the bottom edge, and for $p=3$, an intervention on the middle edge suffice. $M$ is simply the length of a (e.g., the smallest) Turing machine that outputs the correct intervention for the input $p$.


\begin{example}[$N$-point FFT-butterfly network] Here, $\C$ is the FFT butterfly network for computing the $N$-point FFT on a finite field~\cite{pollard1971fast}. $\H$ is the collection of I/O relationship sets $\F_p$ where any single $\F_p$ is the set of all possible changed I/O relationships that only affect a fixed subset of the output nodes (the subset is indexed by $p$) in the butterfly network. 
\end{example}
\textit{Reverse engineering $\C$: } $\A$ declares that it has ($\H,L=1,M=O(N),Q=1)$) RE'ed $\C$. I.e., $\A$ declares that it can a) label which I/O relationship sets are obtained by intervening on a single node; b) Output a single node that on intervention yields the desired I/O characteristics; and c) use $M=O(N)$ memory and one multi-intervention query ($Q=1$) in doing so.  The key observation is that (see Figure~\ref{fig:examples}c) if an I/O change inside a $\F_p$ can arise from interventions on a single node, then one such node is the one that we arrive at by stepping leftwards (by $\log_2(B)$ steps if $B$, the number of affected output nodes, is $2^k$ for some $k\in\mathbb{N}$) from \textit{any} of the affected output nodes (see Fig.~\ref{fig:examples}c for intuition).

The TM output by $\A$ executes the following: the input $p$ provides the indices of the output nodes affected by the intervention. If the number of these nodes is not $2^k$ for some $k\in\mathbb{N}$, output ``no solution'' (a single intervention is insufficient). If it is, then choose the first such output node, and, looking at the FFT architecture, traverse left by $k$ steps. Ask the oracle if an intervention on this node can produce a desired I/O pattern. If yes, then a solution is this node. If not, output ``no solution'' ($>1$ interventions needed).

\begin{figure}[ht]
\begin{center}
\centerline{\includegraphics[width=0.3\columnwidth]{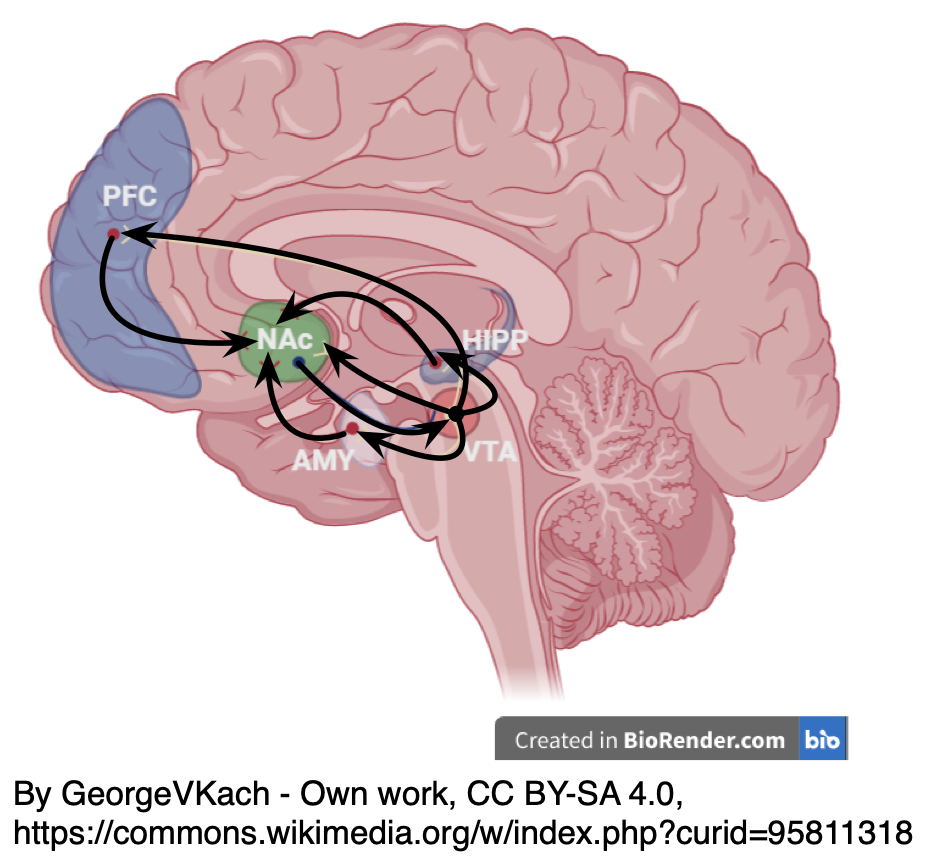}}
\caption{A simplified reward network in the brain for humans (edited to clearly illustrate directions of links. Original downloaded from Wikipedia. Usage license: By GeorgeVKach -  CC BY-SA 4.0, 
\url{https://commons.wikimedia.org/w/index.php?curid=95811318)}. A more detailed figure is in~\cite{russo2013brain}, which furhter illustrates some of the salient nodes and links, including VTA: Ventral Tegmental Area, AMY: Amygdala, HIPP: Hippocampus, PFC: Pre-Frontal Cortex, NAc: Nucleus Accumbens. }
\label{fig:reward}
\end{center}
\end{figure}

\section{\PGedit{Examining the state of understanding of an exemplar brain network: the reward network in the brain}}
\label{sec:reward}

The brain's reward network is a complex circuit that is responsible for desire for a reward, positive reinforcement, arousal, etc. Dysfunction in this network can result in depression, obsessive-compulsive disorder (OCD), addiction, etc. The reward network consists of several large centers, such as the ventral tegmental area (VTA), the Amygdala (Amy), the Nucleus Accumbens (NAc), the Hippocampus (Hipp), the Prefrontal Cortex  (PFC), the Orbitofrontal Cortex (OFC), etc.,  that interact with one another in complex ways. A simplified version of this network is illustrated in  Fig.~\ref{fig:reward}. 

\PGedit{Decades of scientific research has helped develop some understanding of how these large brain regions interact. Below, we provide a brief overview of this body of work in the context of representation of ``valence'' (positive or negative emotion) in the reward network. We refer biologically-inclined readers to~\cite{namburi2016architectural,russo2013brain} as starting points for a deeper study. This overview summarizes the understanding of the reward network as it stands today, and how it can suggest strategies for interventions. We want the reader to observe that, while the understanding is quite detailed, it is still far from that needed for the reverse-engineering goal laid out in our work. This discrepancy could help set an aim for neuroscientists, but also help expand (in subsequent work) our computer-scientific definitions to include limitations of the understanding of, and/or the ability to intervene on, this circuit (e.g. if some nodes are inaccessible for stimulation, or less explored for their functional understanding).}

\PGedit{Back in 1939, Kl\"{u}ver and Bucy~\cite{kluver1939preliminary} observed (in monkeys) that lesioning in the temporal lobe and amygdala led to extreme emotional changes, including loss of fear responses, failure to learn from aversive stimuli, and increased sexual behavior (leading to what is called Kl\"{u}ver-Bucy syndrome in humans with similar injuries). Since this work, animal studies, including in mice, rats, monkeys, etc., have  been frequently used to understand how the brain responds to rewarding/pleasant (positively valenced) or aversive (negatively valenced) stimulus presentation.  Many studies have since examined which regions of the brain ``represent valence'', in that their neural response statistics change when positive vs negatively valenced stimuli are presented. These studies show that many (broad) regions represent valence, including the amygdala~\cite{fuster1971reactivity,shabel2009substantial}, nucleus accumbens~\cite{roitman2005nucleus}, ventral tegmental area~\cite{matsumoto2009two}, orbitofrontal and prefrontal cortex~\cite{schoenbaum1999neural}, lateral hypothalamus~\cite{fukuda1990dopamine}, subthalamic nucleus~\cite{sieger2015distinct}, hippocampus~\cite{fuster1971reactivity}, etc. (see~\cite{namburi2016architectural} for an excellent survey). Recently, advances in neuroengineering, especially in optogenetics~\cite{boyden2005millisecond} and minimally invasive implants~\cite{neuraldust}, enable finer-grained examination \textit{within} these broad  brain regions, including spatiotemporally precise interventions, examining neural ``populations'', i.e., collections of neurons within the same broad region that are similar ``functionally'' (i.e., in how they respond to rewarding or aversive stimuli), genetically (e.g., in the type of neurons), and/or in their connectivity (which region they connect with). For Nucleus Accumbens, for instance, these techniques have led to further separation of the region into its \textit{core} vs its \textit{shell}. Dopamine release in the core (often due to activation of the VTA by a rewarding or aversive stimulus) appears to reinforce rewarding behavior, while same dopamine released in the shell can lead to both rewarding and aversive stimuli. E.g. an addiction `hotspot' is found in the medial shell, while in another location, a `coldspot' reduces response to addictive stimuli, suggesting a fine control by the two populations (see~\cite{namburi2016architectural}). Similarly refined understanding has been developed for other nodes, e.g. the amygdala and the VTA (see~\cite{namburi2016architectural}).
}

\PGedit{Thus, at first glance, one way think that that estimation of what stimulus presentation affects which neural population, and how interventions on a neural population affect processing of a stimulus, are increasingly at a spatial resolution that is required to answer reverse engineering questions we pose here (they will only be further enabled by recent advances in neuroengineering~\cite{boyden2005millisecond,neuraldust}). \textit{Indeed, many clinicians are already utilizing this understanding to do surgical implants} that intervene on functioning of this network,  including for depression~\cite{bewernick2010nucleus}, OCD~\cite{blomstedt2013deep}, addiction~\cite{pierce2013deep}, obesity~\cite{oh2009deep}, etc., when the disorder is extreme. However, our understanding of the network is still severely lacking: we do not know, for instance, what the functions computed at these nodes are, which can have a significant effect on what the minimal intervention is. }

These limitations in understanding of this network affects our ability to provide optimized solutions (e.g. those that are minimal in the sense discussed in our paper). This might seem intuitive, but for completeness we include a simple example  \PGedit{of the influence of the Nucleus Accumbens on subsequent nodes (PFC and OFC). E.g., suppose its output to PFC, $Y_{NAc\rightarrow PFC}=I_{HS}- I_{CS}$ is the difference of the outputs of the hotspot $I_{HS}$ and the coldspot $I_{CS}$ discussed above.  Further, the output to OFC could be A) the ratio; or B) the difference of the outputs of the hotspot and the coldspot. That is, $Y_{NAc\rightarrow OFC}^{(1)}=I_{HS}/I_{CS}$ and $Y_{NAc\rightarrow OFC}^{(2)}=I_{HS}- I_{CS}$.  The goal is to produce an intervention that makes $Y_{NAc\rightarrow OFC}=Y_{NAc\rightarrow PFC}=(1-\gamma) I_{HS}$ (i.e., $\H$ is constituted by the I/O relationships of this form for NAc, one for each $\gamma$). Now, let's assume that links (arising from separate nodes) from the hotspot and the coldspot populations go to PFC and OFC, but the coldspot receives $I_{CS}$ from a common ancestor. It is easy to see that in this case, the reverse engineering solution depends on which is the actual function: if $Y_{NAc\rightarrow PFC}=Y_{NAc\rightarrow OFC}^{(2)}=I_{HS}- I_{CS}$, intervening on the coldspot's ancestor will suffice (namely, by setting $I_{CS}= \gamma I_{HS}$). However, if $Y_{NAc\rightarrow OFC}$ is the ratio,   $Y_{NAc\rightarrow OFC}^{(1)}=I_{HS}/I_{CS}$, the set of minimal interventions could be of cardinality two, constituted by interventions on two locations within the coldspot, to get both outputs to equal $(1-\gamma) I_{HS}$ (namely, one that outputs to PFC should have the signal $\gamma I_{HS}$, whereas one that outputs to OFC should have the signal $1/(1-\gamma)$). Observe that the qualitative relationship between how $I_{HS}$ and $I_{CS}$ affect the outputs is similar in the two possibilities considered here (i.e., the first increases the outputs, and the second reduces it).  }

We think that this suggests the possibility of subsequent work which uses a computer-scientific and information-theoretic lens to contribute to design of experiments (observational and interventional) for garnering the needed inferences about this computational system (such as modeling functions computed at nodes, not just activation/influence of a node).

\section{Discussion and limitations}\label{sec:discussion}



What aspect in our work makes it motivated by neuroscience? After all, our computation system model is fairly general, and builds on prior work in theoretical computer science (see, for instance, work on ``VLSI theory'' in the 1980s~\cite{thompson1979area,Thompson1980Complexity}, which motivated models in~\cite{venkatesh2020information,venkatesh2019should,venkatesh2020else} that we are, in turn, inspired by). While,  intellectually, finding a set of minimal interventions demonstrates strong understanding of how a computational system works, we believe that, operationally, the minimal intervention aspect is most closely tied to networks in neuroscience. Intervening on machine-learning networks (such as ANNs), we can find no natural reason why one should attempt to find minimal interventions. Editing few nodes and/or edges of ANNs implemented in hardware is not a problem that is relevant in today's implementations. However, this problem arises naturally in neuroscience, as one would naturally want to to intervene on as few locations as possible (say, because each location requires a surgical intervention). Of relevance here is a recent work on cutting links in epileptic networks, where the authors seek a similar minimalism~\cite{kini2019virtual}.

{\textit{Our definition of what constitutes a \underline{minimal} intervention} could be tied more closely to biological constraints and peculiarities. While our definition is  motivated from recent surgical interventions on the reward circuitry and advances in neuroscience, sometimes, a noninvasive intervention, even if more diffused, might be preferred to an invasive intervention because it does not require implantation (implantation has risk of infection, need for removal etc.). Similarly, it is known that in the brain (even in the reward network~\cite{namburi2016architectural}), different populations have different likelihood of having neurons that represent and affect valence, and different neurons also have different magnitudes of effects they produce on the network's reward valuation. The practical difficulties of finding a neuron close to where an implant is placed, and/or difficulty-levels of surgical interventions, might need to be incorporated in our model. }

As a practical direction, we think that clinical neuroscience research should not only focus on describing the system or examining some causal pathways of interventions, but also actively on modifications and interventions at the fewest possible locations (or minimal in ways suited to the specific disorder) that can change the I/O behavior to a desirable one. It is conceivable that a neuroscientist might want to demonstrate how they are able to ``control'' the circuit as a way of certifying their understanding of the system. From this perspective, we recognize that this demonstration of control (to any I/O behavior) of the circuit is stronger than what might be needed for getting a \textit{specific} behavior that is desirable, and this can be captured in our definition by careful choice of $\H$.


Our nodes-and-edges discrete-time model is a crude one, because even single cells can exhibit extremely complex dynamics~\cite{hodgkin1952quantitative,izhikevich2007dynamical}. However, models such as ours are commonly used (e.g.~\cite{venkatesh2020information,Bressler2011Wiener,Brovelli2004Beta} and  references therein) in computational neuroscience as a first step, and have been applied to real data. Here, our goal is to use these models to formally state the reverse engineering definition, which allows us to illustrate how reverse engineering could be achieved, and obtain undecidability results for a class of problems.

On our requirements, one can replace bounded memory constraints to other constraints~\cite{simon1990bounded} (e.g., computational or informational~\cite{sims2005rational}),  or also seek \textit{approximately} minimal interventions. We believe that (based on simplicity of results in Appendix~\ref{sec:alternative_definitions}) the general problem will continue to be undecidable for many such variations. Hardness/impossibility results have continued to inform and refine approaches across many fields (e.g.~hardness of Nash equilibrium~\cite{daskalakis2009complexity} and of decentralized control problems~\cite{papadimitriou1986intractable}, and recent undecidability results in physics ~\cite{cubitt2015undecidability,kreinovich2017some}, among others). An undeniable consequence of our result is that there cannot exist an algorithm that solves the reverse engineering problem posed here \textit{in general}. There exist cases that are extremely hard to reverse engineer, even if (as illustrated by our examples in Section~\ref{sec:examples}), in many cases, reverse engineering \textit{can} be accomplished. 

On our undecidability result, note that if the alphabet of computation is finite, then the reverse engineering problems posed here are decidable. However, in that case, the model for brain is also not Turing complete. Finally, one must note that undecidability is not an artifact of our definition. {As shown in Appendix \ref{sec:alternative_definitions}, other plausible definitions we considered also yielded analogous undecidability results.} Our proof technique extends to many related definitions, as illustrated by the relaxed assumptions under which we are able to prove the  results (and, indeed, the relaxed assumptions under which Rice's theorem is obtained). As rapid advances in neuroengineering enable breakthrough neuroscience, challenging conceptual and mathematical problems will arise.  In fact, today, both AI and neuroscience are using increasingly complex models and are asking increasingly complex interpretability/reverse engineering questions. It is worth asking whether instances of this question are undecidable, and, if decidable, how the complexity of a reverse engineering problem scales with the problem size. 





\newpage{}




\bibliography{ITCS_Submission}


\appendix

\section{Proof of Theorem~\ref{thm:equivalence}.1}

We denote a Deterministic Finite Automaton by DFA $(\Sigma, Q_{dfa}, s_0, \delta, F)$ consisting of a finite set of states $Q_{dfa}$, a finite set of input symbols called the alphabet $\Sigma$, a transition function $\delta :Q_{dfa}\times \Sigma \rightarrow Q_{dfa}$, 
an initial state $q_{0}\in Q_{dfa}$, and a set of accepting states $F\subseteq Q_{dfa}$.  To simulate the DFA $(\Sigma, Q_{dfa}, s_0, \delta, F)$, we construct a $\C$ as follows:
	\begin{enumerate}
		\item Nodes of the graph $\G$ in $\C$ are the states of the DFA, with an additional output node $\{o\}$, i.e., $\V = Q_{dfa}\cup\{o\}$. 
		\item Edges of $\G$ are: (i) all the transition edges of the DFA, i.e. for every two states $s,t\in Q_{dfa}$ for which there is $d\in \Sigma$ such that $\delta(s, d)=t$, there is an edge $s\to t$ in $\G$; (ii) self-loops at every node (if not defined by (i)); and (iii) For each accepting state of the DFA ($s\in F$), an edge $s\to o$.
		\item $\S=\Sigma \cup \{ \mathrm{start} , \mathrm{fin} , \mathrm{blank}\}$, i.e., for defining $\S$ in $\C$, we use $\Sigma$, and,  additionally,  $\mathrm{start}$,  $\mathrm{fin}$ (finish), and $\mathrm{blank}$ symbols.
		\item Each node receives the computational input, $d$, at each time step. 
		\item Initialize all states of nodes of $\C$, except the node corresponding to $s_0$, with $\mathrm{blank}$, and the state of the node corresponding to $s_0$ with $\mathrm{start}$. The function computed at each node $s\in \V\setminus \{o\}$, 
		on the transmissions it receives (say $x_1, \dots, x_k$; exactly one of the $x_j$'s is not $\mathrm{blank}$) and the computation input $d$ is
		\begin{align*}
			&f_s(\blank ,.., x_j ,..,\blank, d)\\&\hspace{0.3in}=
			\begin{cases}
				d, &\text{if }x_j \in \Sigma, \delta(s_j, x_j)=s\\
				d, &\text{if }x_j = \text{`}\mathrm{start}\text{'}, s=s_0\\
				\blank &\text{else},
			\end{cases}
		\end{align*}
        and the output node computes:
        \begin{equation*}
            f_o(\blank,\dots, x_j=`\mathrm{fin}\text{'}, \dots,\blank, d) = 
            \begin{cases}
                1, \text{if } s_j\in F\\
                0, \text{else}.
            \end{cases}
        \end{equation*}
	\end{enumerate}
	With this construction, the output node outputs $1$ on a computational input string iff the DFA accepts the string.

\section{Alternative Definitions}\label{sec:alternative_definitions}

Here, we introduce two \textit{non-interventional} definitions of reverse engineering, and show that those are also undecidable.

\begin{definition}[Single-Node RE]
    An agent $\A$ is said to Single-Node Reverse Engineer a computational system $\C$ if given any node $i$ of $\C$, it can determine whether there is any input to the computation system such that at some time instant, the node $i$ stores a non-zero value (i.e. whether the node is ever activated).
\end{definition}

\begin{theorem}
    There is no Turing machine $\A$ which can accept as input, an arbitrary computational system $\C$ (having countably infinite alphabet) and arbitrary node $i$ of $\C$, and output whether the node $i$ is ever activated.
\end{theorem}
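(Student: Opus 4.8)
The plan is to reduce the Halting Problem to the node-activation problem, using the Turing-equivalence established in Theorem~\ref{thm:equivalence}(2); this parallels the reduction behind Theorem~\ref{thm:undecidability}, but routes through halting rather than Rice's theorem, which is cleaner here since ``ever activated'' is not a priori a property of the system's language. Recall that $\mathrm{HALT}=\{\langle T,w\rangle : \text{Turing machine } T \text{ halts on input } w\}$ is undecidable. Suppose, for contradiction, that a Turing machine $\A$ as in the statement exists. Given an arbitrary pair $\langle T,w\rangle$, I would first build a computational system $\C_{T,w}$ that simulates $T$ running on $w$: by the construction in the proof of Theorem~\ref{thm:equivalence}(2) (simulating Siegelmann--Sontag $\sigma$-processor nets, which in turn simulate arbitrary Turing machines), there is such a $\C_{T,w}$ with countably infinite alphabet $\S=\Q$ and computable functions at every node. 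Since $w$ is fixed, I would hard-wire it into the node functions so that $\C_{T,w}$ ignores its own computational input and deterministically runs $T$ on $w$ over its discrete time steps.

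Next, I would augment $\C_{T,w}$ with a single fresh node $i$, initialized to $0\in\S$, with incoming edges from the node(s) of $\C_{T,w}$ encoding the control state of the simulated $T$. The function at $i$ is the (computable) map $\S^{n_i}\to\S$ that returns a fixed nonzero symbol $1\in\S$ exactly when the incoming transmission encodes $T$'s halting state, and returns $0$ otherwise (one may additionally latch it so that it stays nonzero thereafter, though a single nonzero instant already counts as ``activation''). This keeps the system within Definition~\ref{def:comp-sys}: finite graph, computable node functions, $0\in\S$. By construction, node $i$ stores a nonzero value at some time step --- for some (indeed every) computational input to $\C_{T,w}$ --- if and only if the simulation of $T$ on $w$ reaches a halting configuration, i.e., if and only if $\langle T,w\rangle\in\mathrm{HALT}$.

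Finally, running $\A$ on $(\C_{T,w},i)$ would output whether node $i$ is ever activated, and hence decide membership of $\langle T,w\rangle$ in $\mathrm{HALT}$, contradicting the undecidability of the Halting Problem; therefore no such $\A$ exists. The main obstacle --- really the only non-routine point --- is the construction and verification step: confirming that the ``halt-detector'' node can be expressed as a legitimate computable function $\S^{n_i}\to\S$ and that its activation is exactly equivalent to $T$ halting on $w$. Once Theorem~\ref{thm:equivalence}(2) is granted, the Turing-equivalent simulation is off the shelf and everything else is straightforward wiring.
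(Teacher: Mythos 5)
Your proof is correct, but it takes a different reduction than the paper. The paper reduces from the emptiness problem $E_{TM}=\{\langle M\rangle : L(M)=\emptyset\}$: it builds the computational system $\C_M$ simulating $M$ (via Theorem~\ref{thm:equivalence}(2)) and observes that the \emph{output node} is ever activated iff $M$ accepts some input, so the existential quantifier over inputs in the definition of ``ever activated'' is matched directly by the existential quantifier in $L(M)\neq\emptyset$. You instead reduce from $\mathrm{HALT}$, which forces you to do two extra pieces of work the paper avoids: hard-wiring $w$ so the system's own computational input becomes irrelevant (thereby collapsing the ``there exists an input'' quantifier), and adding a fresh halt-detector node whose function fires on the encoded halting state. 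Both steps are legitimate --- the detector is a computable map $\S^{n_i}\to\S$, the augmented system stays within Definition~\ref{def:comp-sys}, and the map $\langle T,w\rangle\mapsto\langle\C_{T,w},i\rangle$ is effective --- so the contradiction with the undecidability of $\mathrm{HALT}$ goes through. What each approach buys: yours is more self-contained in that $\mathrm{HALT}$ is the canonical undecidable problem, whereas the paper's is slightly leaner because the quantifier structure of $E_{TM}$ aligns with the quantifier structure of the node-activation property, so no hard-wiring or auxiliary node is needed. Both rest equally on the (somewhat glossed-over) claim that the Turing-machine-to-$\C$ encoding of Theorem~\ref{thm:equivalence}(2) is itself computable.
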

\begin{proof}
    Suppose there were such a Turing machine $\A$. Then, we can construct a Turing machine $\M_E$ that decides the language 
    \[
        E_{TM} = \{ \langle M \rangle : M \text{ is a TM and } L(M) = \emptyset \}
    \]
    as follows: $\M_E$ accepts input string encoding Turing machine $\langle M \rangle$, creates an encoding of the corresponding computation system $\C_M$ whose output node is labelled as $v$. $\M_E$ simulates $\A$ on input $\langle \C_M, v \rangle$ and outputs true iff $\A$ outputs true. 
    
    Then $\M_E$ as described above decides $E_{TM}$ since node $v$ of the constructed computation system $\C_M$ is ever activated iff $M$ ever accepts an input string. However we know that $E_{TM}$  is undecidable (Theorem 5.2, \cite{sipser13}), thus such a Turing machine $\M_E$ cannot exist. 
\end{proof}

This previous result shows that determining if a node in a neural circuit even represents a message of interest (e.g. positive or negative valence of a reward in Section~\ref{sec:reward}) is undecidable. The result that follows this next definition shows that even estimating approximations of functions being computed (I/O relationships) can be undecidable. 

\begin{definition}[$(k,f)$-Approximate RE]
    Given a computable function $f:\Q\to\Q$ and number $k\in \N$, an agent $\A$ is said to $(k,f)$-Reverse Engineer a computational system $\C$, if it can determine whether $\C$ computes a $k$-approximation of $f$, i.e., whether on every input string $x\in \Q$, we have
    \[
        \frac{1}{k} |f(x)| \leq |\C(x)| \leq k |f(x)|
    \]
\end{definition}

\begin{theorem}
    For every computable function $f$, 
    there is no Turing machine $\A$ which can accept as input an arbitrary computation system $\C$ and output whether $\C$ computes a $k$-approximation of $f$. 
\end{theorem}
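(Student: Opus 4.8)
The plan is to prove undecidability by reduction from the emptiness problem $E_{TM} = \{\langle M\rangle : M \text{ is a TM and } L(M) = \emptyset\}$, which is undecidable (Theorem 5.2 of \cite{sipser13}), mirroring the structure of the preceding proof. Assume for contradiction that a Turing machine $\A$ exists that, on input (an encoding of) an arbitrary computational system $\C$, decides whether $\C$ computes a $k$-approximation of the fixed computable function $f$.

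Given an encoding $\langle M\rangle$, I would effectively construct the description of a computational system $\C_M$ with countably infinite alphabet whose I/O relationship is the following function of its input string $x$: on input $x$, $\C_M$ dovetails the simulation of $M$ on the first $n$ candidate inputs for $n$ steps each, where $n$ is the length of $x$; if some such partial simulation has already accepted, $\C_M$ outputs $2k\,|f(x)| + 1$; otherwise $\C_M$ outputs $f(x)$ (computable since $f$ is). By Theorem~\ref{thm:equivalence}(2), since everything here is Turing-computable, this I/O relationship is realized by a genuine $\C$ in the sense of Definition~\ref{def:comp-sys}, and the map $\langle M\rangle \mapsto \langle\C_M\rangle$ is itself computable.

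The key claim is that $\C_M$ computes a $k$-approximation of $f$ if and only if $\langle M\rangle \in E_{TM}$. If $L(M) = \emptyset$, no partial simulation ever accepts, so $\C_M(x) = f(x)$ for every $x$, which trivially satisfies $\tfrac{1}{k}|f(x)| \le |\C_M(x)| \le k\,|f(x)|$. Conversely, if $M$ accepts some string $w$ within $T$ steps, then for every input $x$ whose length $n$ is at least the index of $w$ in the enumeration and at least $T$, we get $\C_M(x) = 2k\,|f(x)| + 1 > k\,|f(x)|$, so the upper-bound inequality fails and $\C_M$ is not a $k$-approximation of $f$. Running $\A$ on $\langle\C_M\rangle$ would therefore decide $E_{TM}$, a contradiction; hence no such $\A$ exists. (A reduction from the halting problem, simulating $M$ on a single input for a length-bounded number of steps, works equally well and avoids the dovetailing.)

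The main point requiring care is the interaction between the $k$-approximation window and inputs where $f$ vanishes: when $f(x) = 0$ the condition forces $\C(x) = 0$, and more generally a naive perturbation such as $f(x)+1$ can still lie inside the multiplicative band when $|f(x)|$ is large. This is exactly why the ``breaking'' output must be chosen to provably exceed $k\,|f(x)|$ uniformly in $x$; the choice $2k\,|f(x)| + 1$ does this for every $x$ (including $f(x) = 0$, where it outputs $1 \neq 0$). The only other thing to verify is that the dovetailing construction yields a bona fide $\C$, which is immediate from Theorem~\ref{thm:equivalence}(2) together with the effectiveness of producing $\langle\C_M\rangle$ from $\langle M\rangle$.
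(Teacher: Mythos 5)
Your proof is correct and follows essentially the same route as the paper's: a reduction from the emptiness problem $E_{TM}$, building a system that computes $f$ verbatim unless evidence of acceptance by $M$ is found, in which case it emits a value outside the multiplicative band. Two of your refinements are worth noting because they patch genuine subtleties that the paper's version glosses over. First, the paper's $\C_M$ simulates $M(x)$ directly and so may fail to halt when $M$ loops on $x$; your length-bounded dovetailing makes $\C_M$ total, so the equivalence ``$\C_M$ is a $k$-approximation of $f$ iff $L(M)=\emptyset$'' holds cleanly. Second, the paper's breaking value $k f(x)+1$ need not leave the band when $f(x)<0$ (e.g.\ $k=2$, $f(x)=-1$ gives $|kf(x)+1|=1\in[\tfrac12,2]$), whereas your choice $2k|f(x)|+1$ exceeds $k|f(x)|$ uniformly, including at $f(x)=0$. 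So the argument is the paper's, executed more carefully.
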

\begin{proof}
    As in the previous theorem, suppose there were such a Turing machine $\A$. Then, we construct a Turing machine $\M_E$ deciding $E_{TM}$ as follows: $\M_E$ accepts input string encoding Turing machine $\langle M \rangle$. It constructs an encoding $\langle \C_M\rangle$ of a computation system which takes input string $x$, first simulates computing $M(x)$. Then, if $M$ accepts $x$, outputs $k f(x) + 1$, else outputs $f(x)$. Then $\M_E$ simulates $\A$ on input $\langle \C_M\rangle$ and outputs true iff $\A$ determines that $\C$ is a $k$-approximation of $f$. 
    
    Thus, $\M_E$ described as above decides $E_{TM}$ since the constructed $\C_M$ computes a $k$-approximation of $f$ iff $M$  rejects all inputs. However, as we know, $E_{TM}$ is undecidable. Thus by contradiction, such an $\A$ does not exist. 
\end{proof}

\end{document}